\newtheorem{definition}{Definition}[section]
\newtheorem{theorem}[definition]{Theorem}
\newtheorem{proposition}[definition]{Proposition}
\newtheorem{corollary}[definition]{Corollary}
\title{Transforming the Challenge of Constructing Low-Discrepancy Point Sets into a Permutation Selection Problem}
\author[1]{Fran\c{c}ois Cl\'ement \thanks{francois.clement@lip6.fr}}
\author[1]{Carola Doerr \thanks{carola.doerr@lip6.fr}}
\author[2]{Kathrin Klamroth \thanks{klamroth@math.uni-wuppertal.de}}
  \author[3]{Lu\'is Paquete \thanks{paquete@dei.uc.pt}}
  \affil[1]{Sorbonne Universit\'e, CNRS, LIP6, Paris, France}
  \affil[2]{University of Wuppertal, School of Mathematics and Natural Sciences, IMACM, Gaußstr.~20, 42119 Wuppertal, Germany}
  \affil[3]{CISUC, Department of Informatics Engineering, University of Coimbra, Portugal}
\begin{document}
\maketitle

\begin{abstract}
Low discrepancy point sets have been widely used as a tool to approximate continuous objects by discrete ones in numerical processes, for example in numerical integration. Following a century of research on the topic, it is still unclear how low the discrepancy of point sets can go; in other words, how regularly distributed can points be in a given space. Recent insights using optimization~\cite{CDKP} and machine learning~\cite{MPMC} techniques have led to substantial improvements in the construction of low-discrepancy point sets, resulting in configurations of much lower discrepancy values than previously known. Building on the optimal constructions from~\cite{CDKP}, we present a simple way to obtain $L_{\infty}$-optimized placement of points that follow the same relative order as an (arbitrary) input set. 
Applying this approach to point sets in dimensions 2 and 3 for up to 400 and 50 points, respectively, we obtain point sets whose $L_{\infty}$ star discrepancies are up to 25\% smaller than those of the current-best sets from~\cite{MPMC}, and around 50\% better than classical constructions such as the Fibonacci set.
\end{abstract}

\section{Introduction}
Discrepancy measures quantify how uniformly a point set is distributed in a given space. There exist a number of different discrepancy measures, among which one of the most important is the $L_{\infty}$ star discrepancy. Its importance stems from the Koksma-Hlawka inequality~\cite{Koksma,Hlawka}, bounding the error made when approximating an integral by the average of a finite number of local evaluations. The $L_{\infty}$ star discrepancy of a finite point set $P$ in the unit cube $[0,1]^d$, $d\geq 1$, corresponds to the worst absolute difference between the proportion of points that fall inside a box $[0,q)\subset[0,1]^d$ and the volume of that box. More formally, it is given by
\begin{equation}\label{def:discr}
d^{*}_{\infty}(P):=\sup_{q \in [0,1)^d}\bigg| \frac{|P\cap [0,q)|}{|P|}-\lambda([0,q))\bigg|,    
\end{equation}
where $\lambda$ is the Lebesgue measure. Points with small $L_{\infty}$ star discrepancy, \emph{low-discrepancy point sets or sequences}, have been extensively used in a wide variety of contexts~\cite{CauwetCDLRRTTU20,GalFin,MatBuilder,SantnerDoE}, of which the most important is numerical integration \cite{DickP10}.

\begin{figure}[t]
    \centering
    \includegraphics[width=0.6\textwidth]{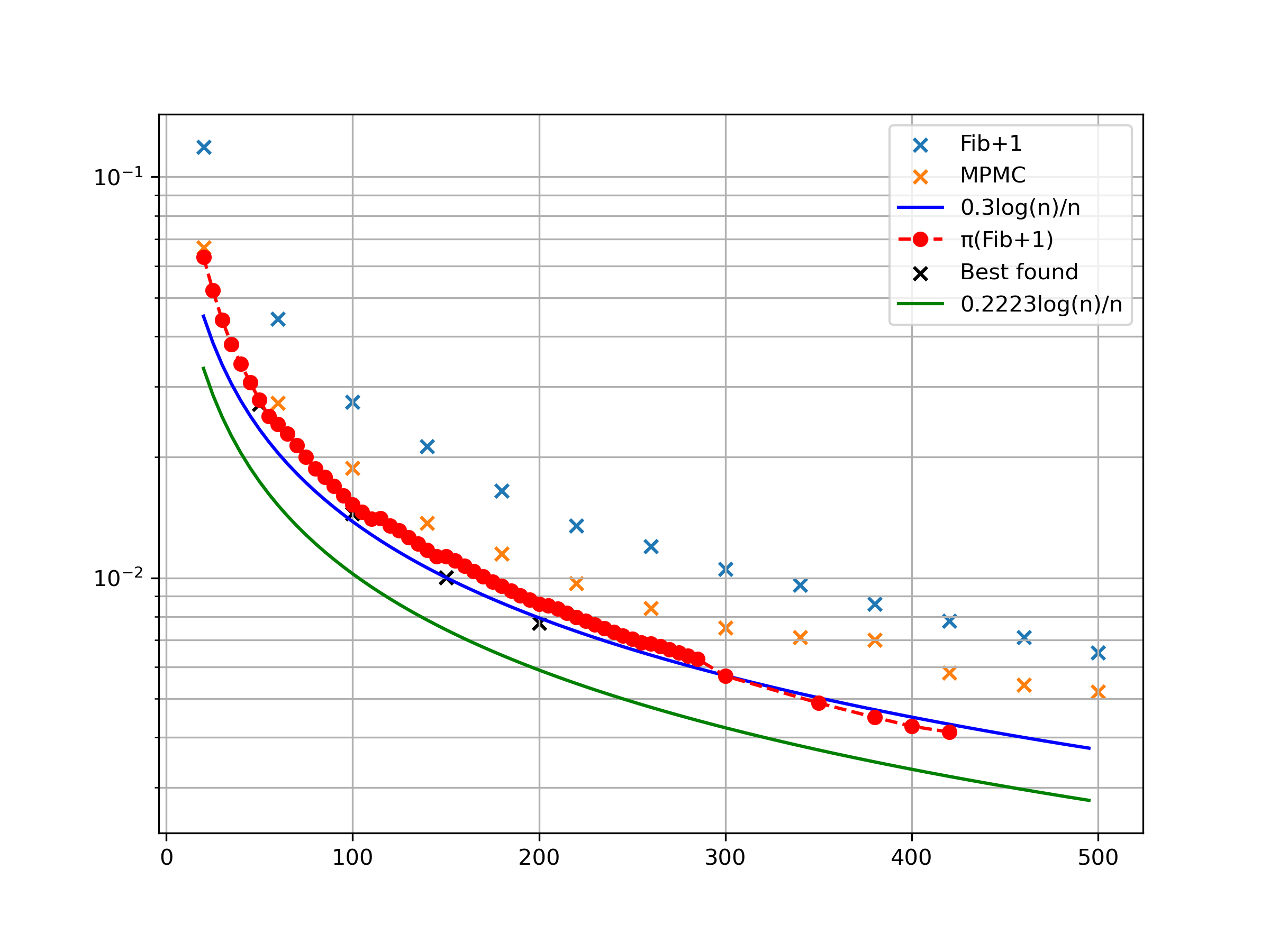}
    \caption{Best discrepancy values obtained with our permutation approach $\pi$(Fib+1), compared to the traditional Fibonacci set shifted by one point (Fib+1, see Sections~\ref{sec:classic} and~\ref{sec:shift}) and the MPMC results from~\cite{MPMC}. The black crosses labelled ``Best found'' correspond to the best values obtained via a more extensive search of the different shifts at the start of the Fibonacci sequence (see Figure~\ref{fig:Shifts}). The current best theoretical upper bound for the asymptotic discrepancy order, obtained by Ostromoukhov~\cite{Ostro} for a \emph{far} larger number of points (between $60^7$ and $60^8$), is given by the green line $0.2223\log(n)/n$. The $0.3\log(n)/n$ curve is added as a comparison point.}
    \label{fig:BEST}
\end{figure}

Despite extensive research on this topic (see for example the books~\cite{DickP10,Nie92,Pano,Mat,NW} to name a few), a number of key questions remain, among them the best possible asymptotic discrepancy order. In dimension $d=2$, the optimal asymptotic discrepancy order is known to be $\Theta(\log(n)/n)$ as shown in~\cite{Schmidt}, where $n$ is the number of points, but the optimal constants are not known. For the analogous one-dimensional sequences (see~\cite{Roth54} for the relationship), the lower bound is 0.065 by Larcher and Puchhammer in~\cite{LarPuch}, while the upper bound 0.22223 was obtained by Ostromoukhov with a permutation of the base 60 van der Corput sequence~\cite{Ostro}. On the other hand, no asymptotically tight bounds are available in higher dimensions. To this day, the best lower bound is by Bilyk, Lacey, and Vagharshakyan~\cite{BilykSmall} in $\Omega(\log(n)^{d/2+c(d)}/n)$, where
$c(d)$ is a constant depending only on the dimension.
The best currently-known constructions have a discrepancy of $O(\log(n)^{d-1}/n)$.

Setting aside the asymptotic regime, even less is known for bounds for the best possible discrepancy value $d^*_{\infty}(n,d)$ for fixed values of $n$ and $d$. It is known that $d^*_{\infty}(n,d) \leq c\sqrt{d/n}$ where $c$ is a constant~\cite{Heinrich}, but the proof is not constructive and, despite a variety of approaches tackling the construction of low-discrepancy point sets~\cite{DoerrCompobyCompo,DoerrGneSriv,GneBrack2007}, no explicit constructions achieving this bound are known. Lower bounds in this context are also rare. It was shown by White~\cite{whit:onop:1976} that $d^*_{\infty}(n,2)\geq 1/n$ for $n\geq 4$, extended in~\cite{CDKP} to $d^*_{\infty}(n,d)\geq 1/n$ for $n\geq 3$ and $d\geq 3$. Optimal constructions are known only in very specific settings, either for low dimensions~\cite{whit:onop:1976,CDKP} or for one or two points~\cite{PTV,LarPuch}.

We introduced two different mathematical programming models to build provably optimal point sets in~\cite{CDKP}. The second of these, \emph{the assignment formulation}, defines a point set in dimension 2 by considering two lists of coordinates and an assignment matrix, linking the list of coordinates to the points' positions. For an $n$-point set, choosing an assignment matrix can also be seen as choosing a permutation of $\{1,\ldots,n\}$. In this paper, we build on this approach by removing the optimality constraint and focusing our work on choosing appropriate permutations. Indeed, choosing the assignment is the most difficult part of the model and fixing it thus greatly lessens the computational burden. We are therefore able to reach a much higher number of points with excellent discrepancy values than in our previous models in~\cite{CDKP}. In particular, our results in dimension 2 shown in Figure~\ref{fig:BEST} outperform the very recent and currently best-performing machine learning approach in~\cite{MPMC}. Our approach can also be seen as transforming the problem of constructing low-discrepancy point sets, with $n$ points in $[0,1]^d$, to that of selecting $d-1$ adequate permutations.

\section{Optimizing Point Sets for a Given Permutation}

\subsection{The Global Assignment Formulation}
For the sake of completeness, we recall the assignment model introduced in~\cite{CDKP} for the construction of optimal point sets in dimension $2$. It consists of two sets of sorted variables $x$ and $y$ representing respectively the set of first and second coordinates of the points, whose placement will be optimized using the discrete structure of the $L_{\infty}$ star discrepancy computation shown in~\cite{Nie92}. This is then combined with a permutation matrix $a$ to obtain the exact point set, i.e., $a$ identifies the assignment of $x$- and $y$-coordinates in the constructed point set. The continuous variable $f$ corresponds to the $L_\infty$ star discrepancy value of the point set defined by $x,y$ and $a$, and is to be minimized.

\begin{subequations}\label{eq:M5_2dcont}
\begin{align}
    \min\;\; & f && \nonumber\\ 
    \text{s.t.}\;\; & \displaystyle \frac{1}{n} \sum_{u=1}^i\sum_{v=1}^j a_{uv} - x_{i}y_{j} \leq f && \forall i,j=1,\dots,n \label{eq:M5_upperboundcont}\\
    & \displaystyle \frac{-1}{n} \sum_{u=1}^{i-1}\sum_{v=1}^{j-1} a_{uv} + x_{i}y_{j} \leq f && \forall i,j=1,\dots,n+1 \label{eq:M5_lowerboundcont}\\
    & x_{n+1}=1,\; y_{n+1}=1 && \label{eq:M5_dummy}\\
    & x_{i+1}-x_{i} \geq \varepsilon && \forall i=1,\dots,n-1 \label{eq:M5_4f}\\
    & y_{i+1}-y_{i} \geq \varepsilon && \forall i=1,\dots,n-1 \label{eq:M5_4fbis}\\
    & \sum_{i=1}^n a_{ij} = 1 && \forall j=1,\dots,n \label{eq:M5_column}\\
    & \sum_{j=1}^n a_{ij} = 1 && \forall i=1,\dots,n \label{eq:M5_row}\\
    & x_i,y_i\in[0,1] && \forall i=1,\dots,n\; \nonumber \\
    & a_{ij}\in\{0,1\} && \forall i,j=1,\dots,n\; \nonumber \\
    & f\geq 0 .\hspace*{-4cm} &&\nonumber
\end{align}
\end{subequations}

 We point out the \emph{general position} hypothesis in constraints~\eqref{eq:M5_4f} and~\eqref{eq:M5_4fbis}, specifying that no two points share a coordinate. We showed in~\cite{CDKP} that there is an optimal point set verifying this hypothesis in two dimensions. In our previous approach, no supplementary information was given to the solver apart from extra constraints to limit symmetries and variable range. It would then solve the model to optimality.

In our new approach, the permutation matrix $a$ is fixed before optimization. The solver then only needs to find the optimal values of $x$ and $y$, i.e., of the coordinates, given the permutation of the points. This allows for much faster solving times, even for instances of a few hundred points. In particular, excellent, if not optimal, solutions are found within seconds. Since we are no longer trying to find a provably optimal set, we will always use the general position hypothesis, even in higher dimensions.

\subsection{\texorpdfstring{$L_{\infty}$}{L-infty} Star Point Sets from Random Permutations and Classical Sequences}\label{sec:classic}

Clearly, when pre-specifying the assignment matrix $a$ in model \eqref{eq:M5_2dcont} and optimizing only for the coordinates $x$ and $y$, we obtain an upper bound $f=f(a)$ on the optimal $L_\infty$ star discrepancy value the quality of which depends on the choice of $a$. In two dimensions, this assignment matrix can be identified with a permutation $\pi$.
The challenge of identifying point sets of low discrepancy is now transformed into choosing the permutation $\pi$ well, i.e., such that the obtained upper bound is as close as possible -- or potentially equal -- to the optimal $L_\infty$ star discrepancy value. Unfortunately, exhaustive search over all $n!$ permutations is clearly impossible for even a moderate number of points. Natural candidates are the permutations defined by classical low-discrepancy point sets, such as the Fibonacci set, the van der Corput sequence, or the Kritzinger sequence~\cite{Kritz} ``lifted'' to two dimensions. Since it will be useful for us in the later sections, we recall that the Fibonacci set is the two-dimensional set version of the Kronecker sequence with golden ratio. It is defined as $\{(i/n,\{i*\phi\}): i \in \{1,\ldots,n\}\}$, where $\phi$ is the golden ratio and $\{x\}$ is the fractional part of $x$. We will call \emph{lifting} the process of transforming a $d-1$ dimensional sequence to a $d$ dimensional set, by setting the $d$-th coordinate of the $i$-th point to $i/n$. This is a common practice in discrepancy, already used by Roth in~\cite{Roth54}, and does not change the discrepancy order of the resulting set.

A natural first step is to try random permutations of $\{1,\ldots,n\}$, with results here for $n=100$. Figure~\ref{fig:Rando} shows these permutations do not perform so well. Indeed, the majority of values are between 0.03 and 0.04, while already the unoptimized shifted by one Fibonacci set for 100 points has a discrepancy of 0.0261. 

\begin{figure}
    \centering
    \includegraphics[width=0.5\textwidth]{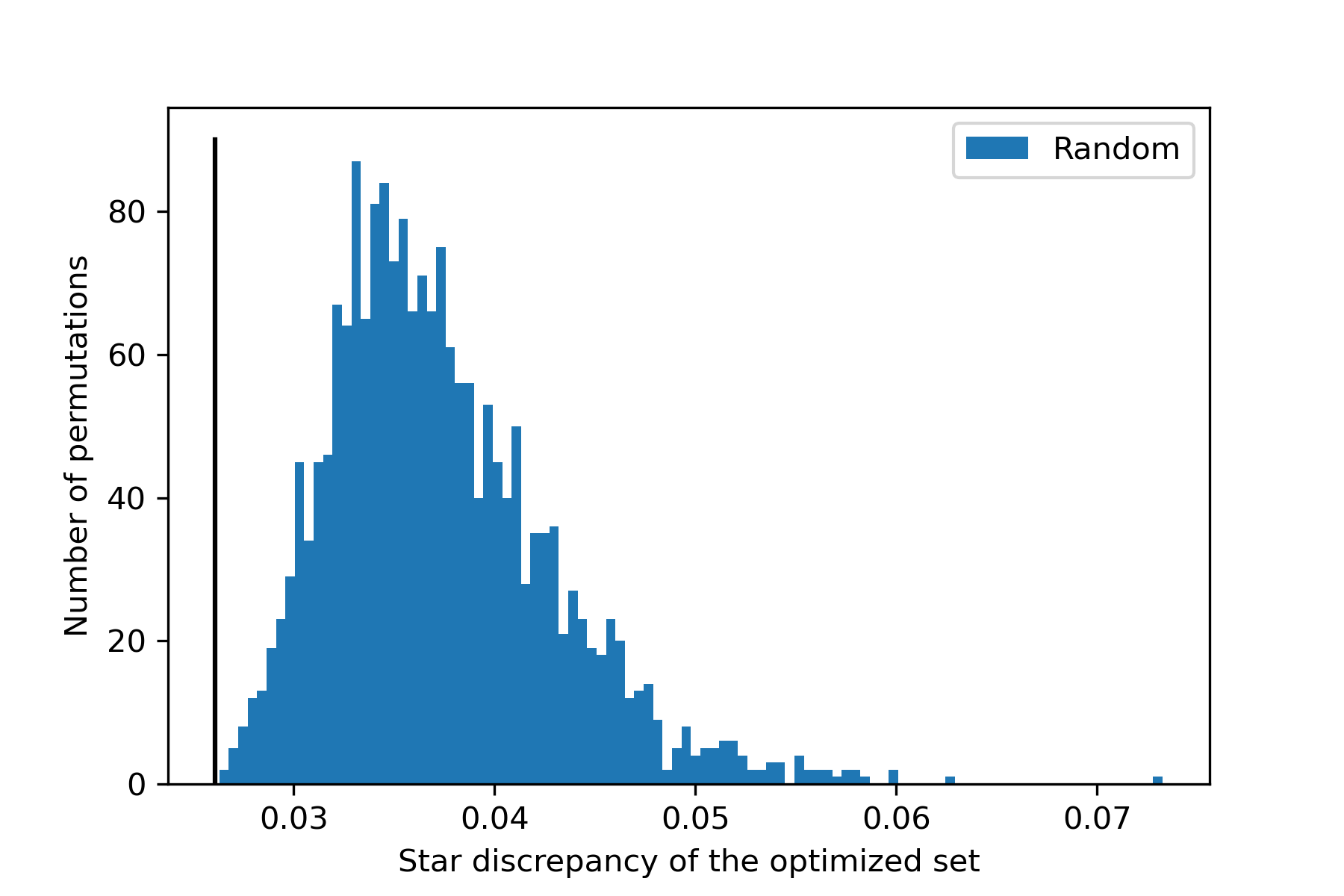}
\caption{Optimized discrepancy values $f(\pi)$ obtained by taking 2000 random permutations, $n=100$. We recall that the discrepancy for the (unoptimized but shifted by 1) Fibonacci set at 100 points is 0.0261, represented by the black vertical line in the plot.}
    \label{fig:Rando}
\end{figure}

These results for random permutations at $n=100$ can be compared to those obtained with permutations from usual constructions in Table~\ref{tab:usu}. While the common constructions do not perform equally well, the permutations generated by the van der Corput sequence and the Fibonacci set lead to excellent results, largely outperforming those from MPMC. Furthermore, this trend seems to be increasing as $n$ goes up.

\begin{table}
    \caption{Optimized discrepancy values $f(\pi)$ for the Kritzinger, van der Corput and Fibonacci sets in dimension 2. The MPMC values from~\cite{MPMC} are added as reference, but their associated permutations were not entered in model \eqref{eq:M5_2dcont} to further optimize the coordinates. For further reference, the best values we obtained during all our tests were 0.02709 for 50 points (as good as previous constructions for 100 points), 0.014444 for 100 points, 0.01 for 150 points, and 0.007719 for 200 points, just below the MPMC value for $n=300$.}
    \centering
    {\footnotesize{
    \begin{tabular}{|c|c|c|c|c|c|c|c|c|c|c|c|}
    \hline
    $n$ & 20& 50 &100 &180 &220&260& 350 & 420 & 500\\
    \hline
    
    MPMC &0.0666& -  &0.0188 &0.0115& 0.0097& 0.0084& -&0.0058& 0.0052\\
    \hline
    $\pi$(van der Corput) & 0.06562&0.02899&  0.01588 & 0.00932& 0.00784& 0.00689& - &- &-\\
    \hline
    $\pi$(Fibonacci) & 0.06319&0.02733& 0.01525&  0.00954& 0.00799 & 0.006862& - &- &- \\
    \hline
    $\pi$(Fibonacci+1)& 0.06219&0.02742&  0.01492& 0.00901& 0.00737 & 0.00640 & 0.004872& 0.00412& -\\
    \hline 
    Fibonacci+1&0.105883 &0.049165 & 0.026105 &0.015165 &0.012407 & 0.010894& 0.008731 &0.00728 &0.00611 \\
    \hline

    \end{tabular}
    }}
    \label{tab:usu}
\end{table}

For all these instances, while proving optimality of $f(\pi)$ when solving \eqref{eq:M5_2dcont} with given $\pi$ is a more difficult question, nearly optimal sets are found within minutes if not seconds. In particular, we note most of the time taken by the solver is in the presolving and setting up the model. Once solving begins in earnest, good candidates are found very quickly. For all tests in this paper, each problem was given a 60 or 600 seconds (depending on $n$ and $d$) time limit to find the best solution, which was returned afterwards. The only exceptions are the largest problems ($n>300$), which were given an hour time limit. Each problem was run on a single machine (whether from the MeSu cluster at Sorbonne Université or a laptop). As such, this approach can be seen as relatively inexpensive from a computational perspective.

Results suggest this method is able to find excellent point sets very quickly, up to around 400 points as was shown in Figure~\ref{fig:BEST}. However, the poorer results with random permutations imply that the permutation should be chosen with care. The assignment associated with the Fibonacci set gives an excellent point set, with discrepancy 0.01525, much better than that of the initial Fibonacci set 0.027485 and the roughly 0.019 bound obtained with our previous models and a 40\,000s runtime limit. Shifting the Fibonacci set (and hence modifying the associated permutation $\pi$) by one point, starting at $(0,\{\phi\})$ rather than $(0,0)$, gives even better results: 0.01492 against 0.0261 for the initial set.

\subsection{\texorpdfstring{$L_{\infty}$}{L-infty} Star Point Sets from Shifted Sequences}\label{sec:shift}

While classical sequences like the Fibonacci sequence usually start with a point at $(0,0)$, it turns out that shifting this first point to a later value of the sequence leads to permutations $\pi$ that perform significantly better when used as the basis in model \eqref{eq:M5_2dcont}. Indeed, we have no requirement to keep the first point in the lower-left corner. While it was shown to play an important role for $(t,m,s)$-nets in~\cite{OwenSobol}, we have no such constraint for a Kronecker-like construction. It turns out that by starting the sequence at the second point, $(0,\{\phi\})$ and extracting the corresponding permutation $\pi$, one obtains already very different results for $f(\pi)$. We performed experiments on the Fibonacci sequence, but this could be generalized to any set generated by a lifted sequence. Figure~\ref{fig:Shifts} shows how the discrepancy values $f(\pi)$ evolve as we change the starting point of the Fibonacci sequence for $n=$50, 100, 150, and 200.

\begin{figure}
    \centering
    \includegraphics[width=0.24\textwidth]{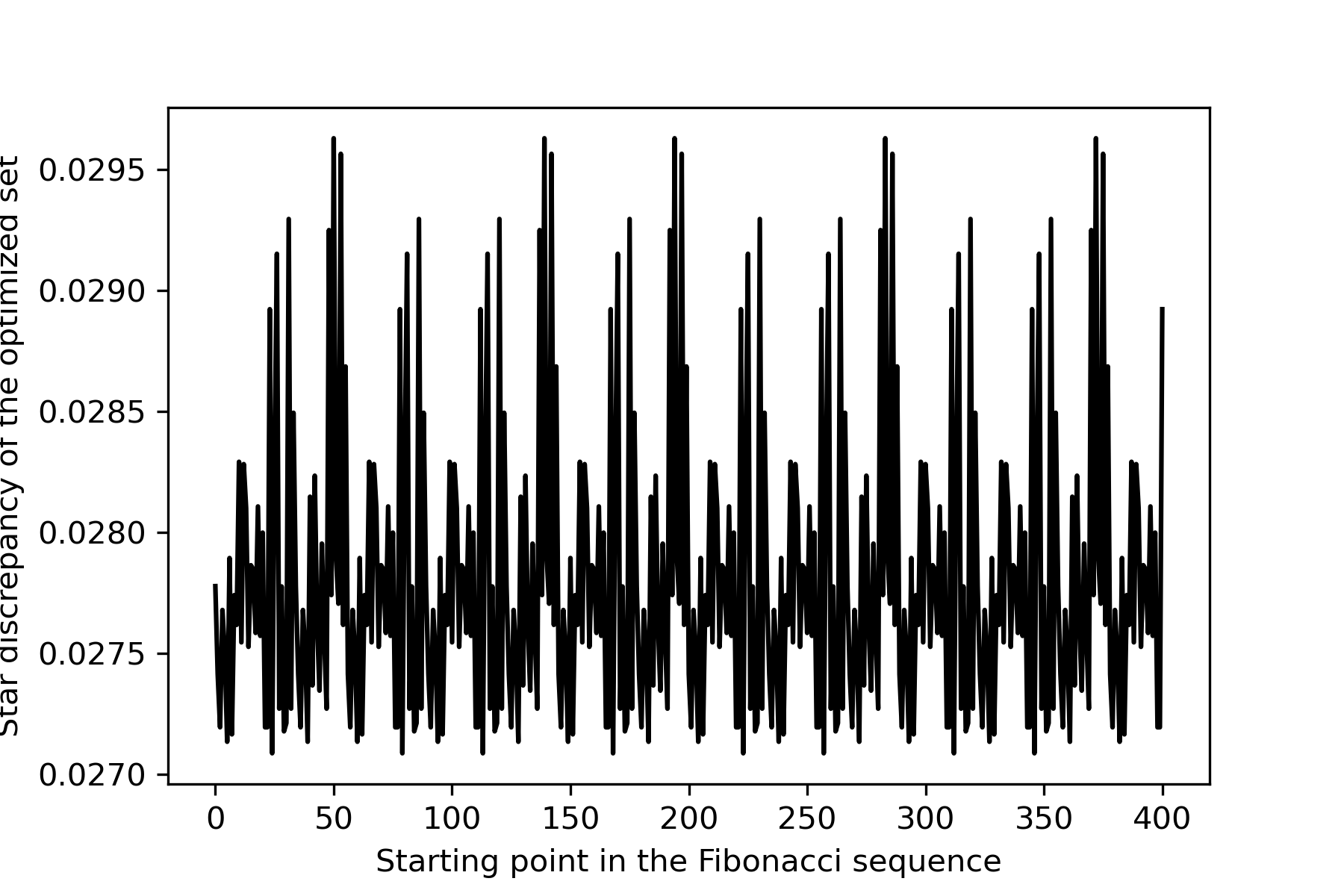}
    \includegraphics[width=0.24\textwidth]{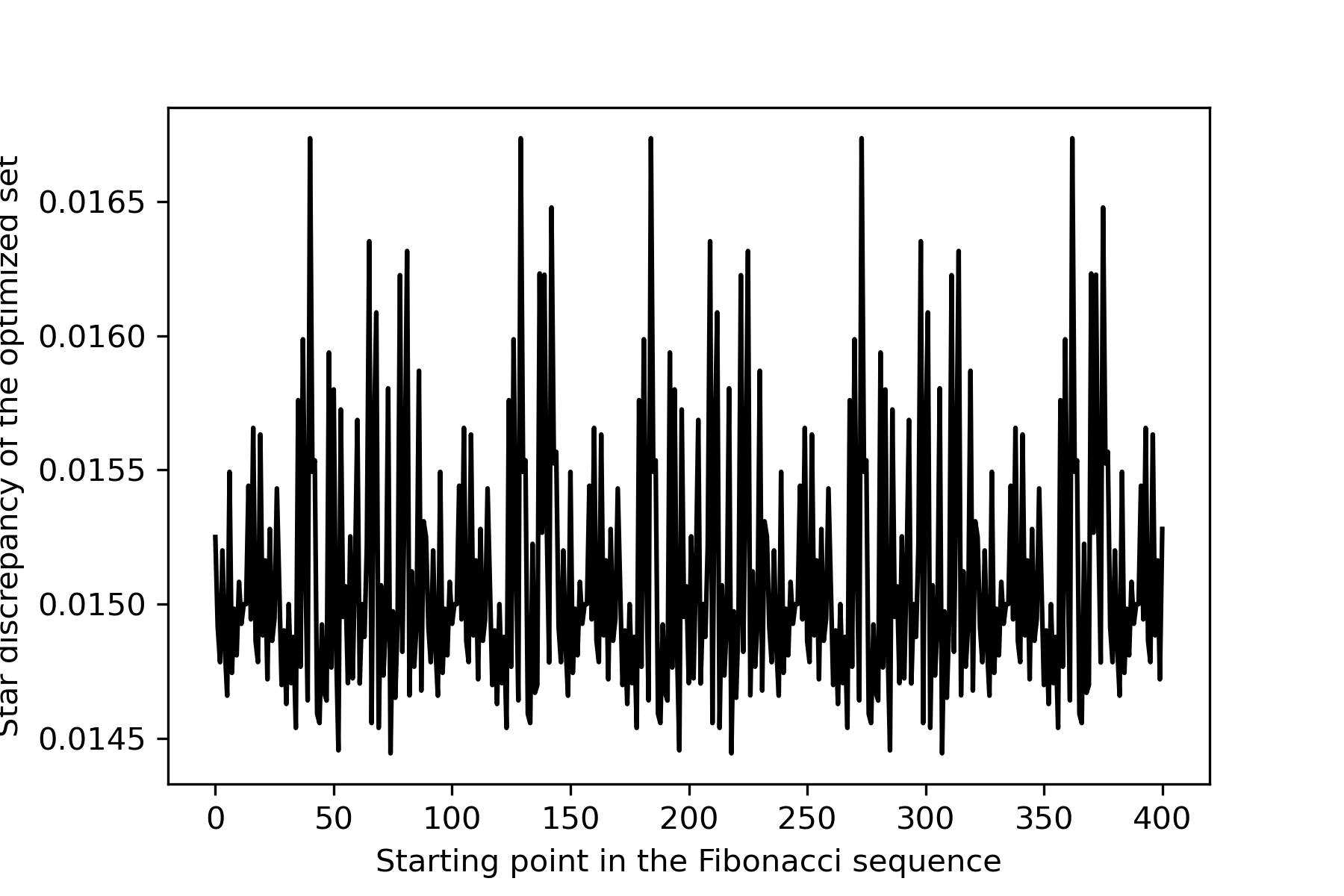}
    \includegraphics[width=0.24\textwidth]{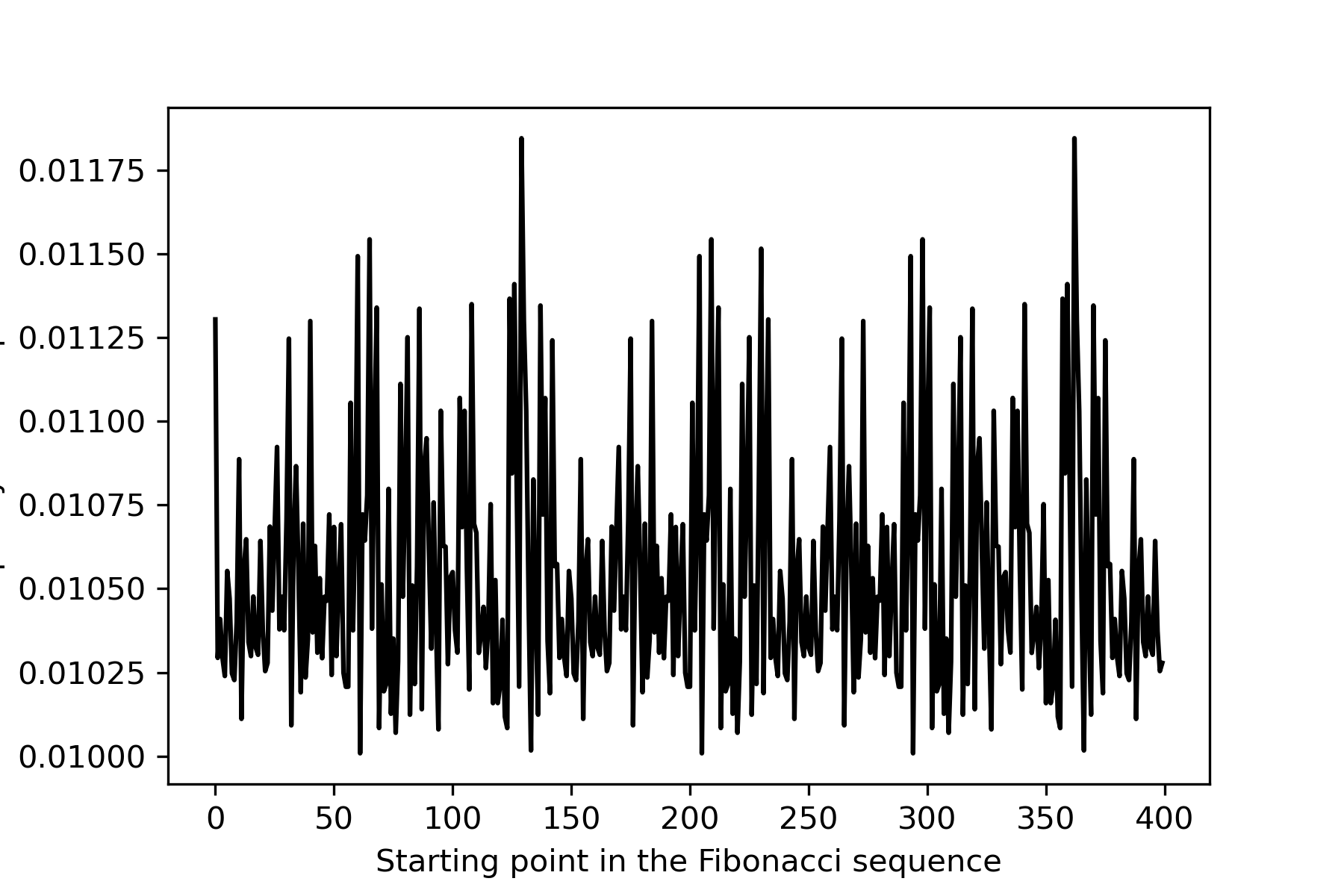}
    \includegraphics[width=0.24\textwidth]{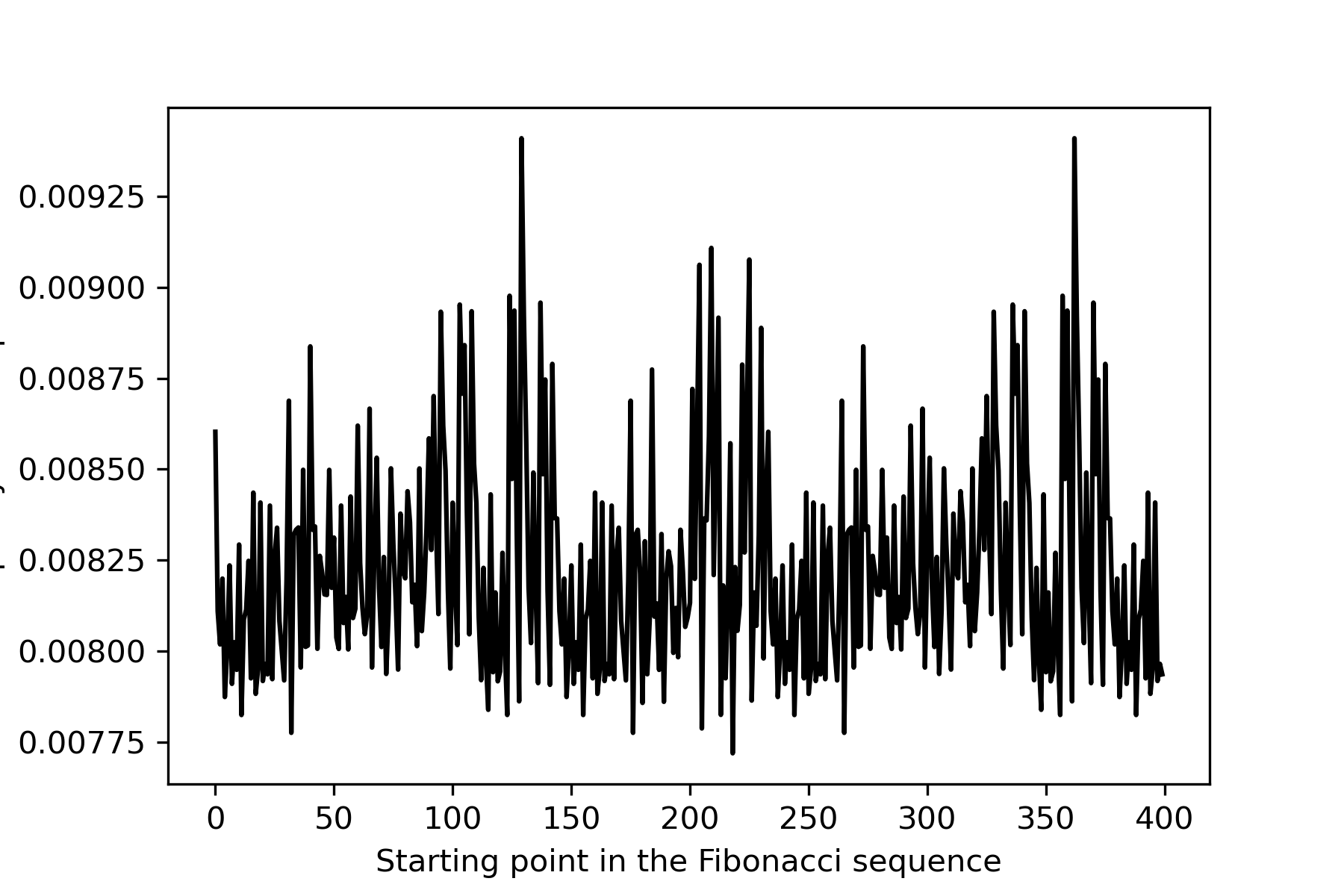}
    \caption{Evolution of the optimized $L_\infty$ star discrepancy $f(\pi)$ obtained with model \eqref{eq:M5_2dcont} for permutations $\pi$ obtained from shifted Fibonacci sequences, depending on the starting point of the sequence,  
    for $n=$50, 100, 150, 200, from left to right. The best values obtained are 0.027088 for $n=50$, 0.014444 for $n=100$, 0.01 for $n=150$ and 0.007718 for $n=200$.}
    \label{fig:Shifts}
\end{figure}

Unfortunately, from these four simple examples, we do not observe either periodicity nor a regularly good starting point for the Fibonacci sequence. Nevertheless, it appears that already starting in $(0,\{\phi\})$ rather than $(0,0)$ gives noticeable improvements in all the plots, while never performing badly. Figure~\ref{fig:5shifts} shows the discrepancy values obtained by taking $(0,0)$, $(0,\{\phi\})$ and 3 random starting indices $j$ leading to starting points in $(0,\{j*\phi\})$ (the values of $j$ are changed in each instance). While the differences are minor, starting in $(0,\{\phi\})$ seems to be a consistently better than average choice. In particular, it is better in nearly all cases than starting in $(0,0)$. As such, we use this starting point in our experiments over all $n$ presented in Figure~\ref{fig:BEST}. We point out that changing this starting point also improves the Fibonacci set itself, without optimization.

\begin{figure}
    \centering
    \includegraphics[width=0.6\textwidth]{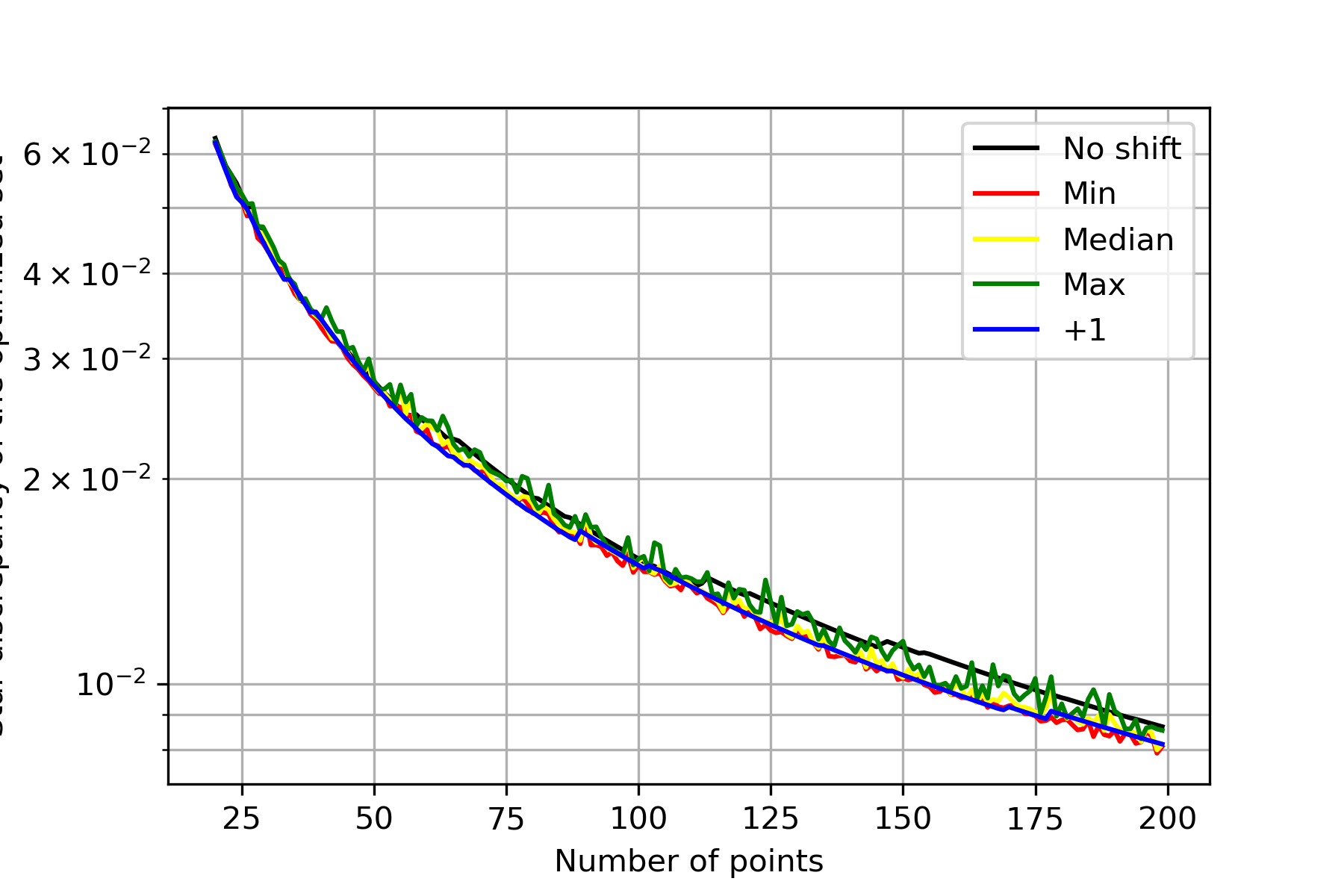}
    \caption{Discrepancies obtained by optimizing the Fibonacci permutation starting in 0 (No shift), 1 (+1), or three random integers (which we then split into a minimum, median and maximum value curves). Starting in 1 seems to be a reliably good choice, consistently outperforming the start in 0 and generally also better than the random choices.}
    \label{fig:5shifts}
\end{figure}

Finally, we provide a set of images to represent our best point set obtained with the permutation $\pi$ from a shifted Fibonacci set for 100 points (starting at $(0,\{74\phi\})$), a plot of the points and different discrepancy heatmaps: all the local discrepancies, only closed boxes and only open boxes. In this context, closed boxes correspond to boxes with too many points inside, while open boxes correspond to boxes with too few, and can be associated to constraints~\eqref{eq:M5_lowerboundcont} and~\eqref{eq:M5_upperboundcont}. We also provide an illustration of the difference between the initial Fibonacci set $F_{74+100}$ and the one obtained after optimization $F^*$. We note that while the points in the last figure look very close to each other, it is mostly because there is an underlying grid of 100 by 100 where each points can move in approximately a single box. One key observation is that closed boxes seem to be less of a limiting factor than for the original Fibonacci set (see heatmaps in~\cite{CDKP}). While there is still some room for improvement, the overall balance seems much better, with both closed and open boxes reaching the worst discrepancy values.

\begin{figure}
    \centering
    \includegraphics[width=0.325\textwidth]{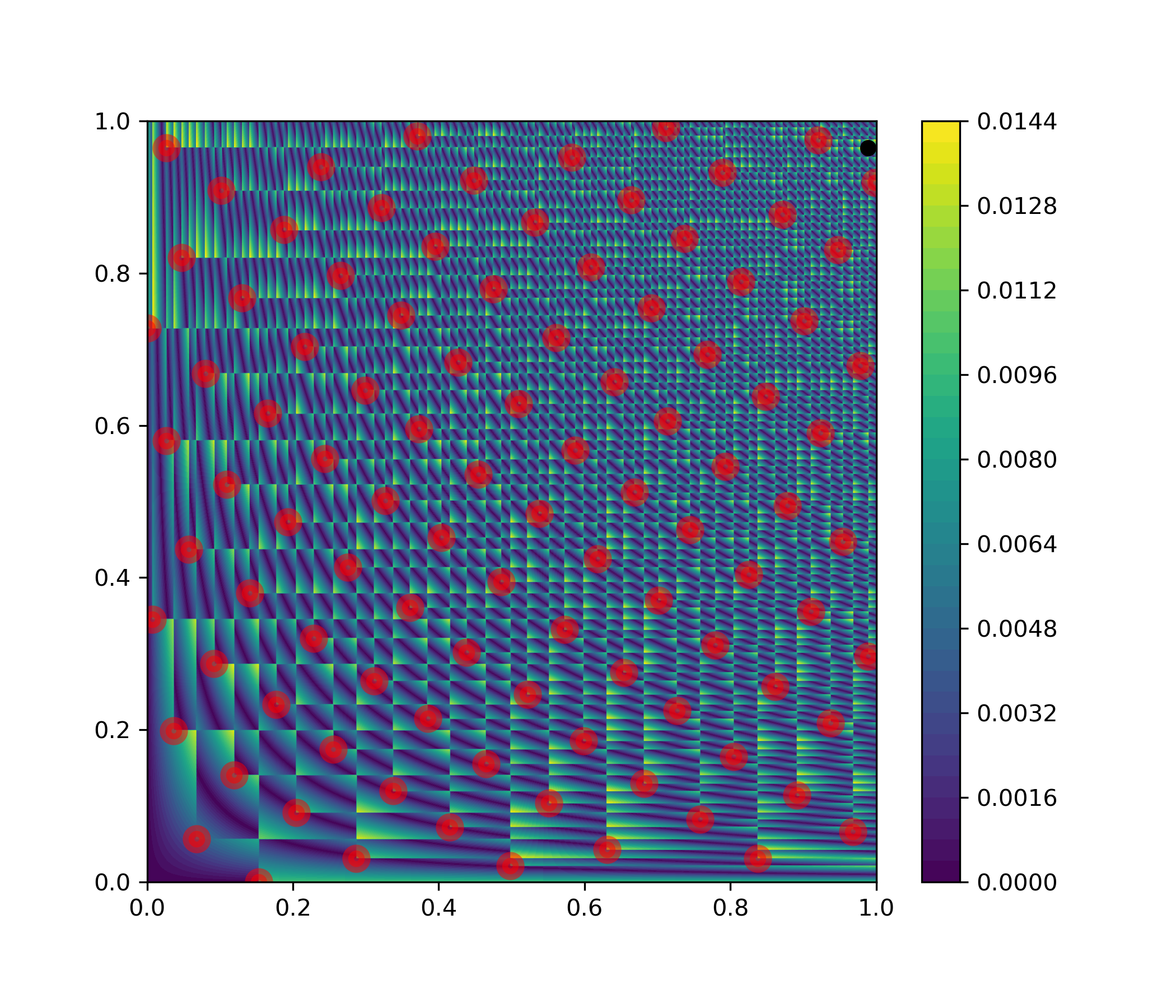}
    \includegraphics[width=0.325\textwidth]{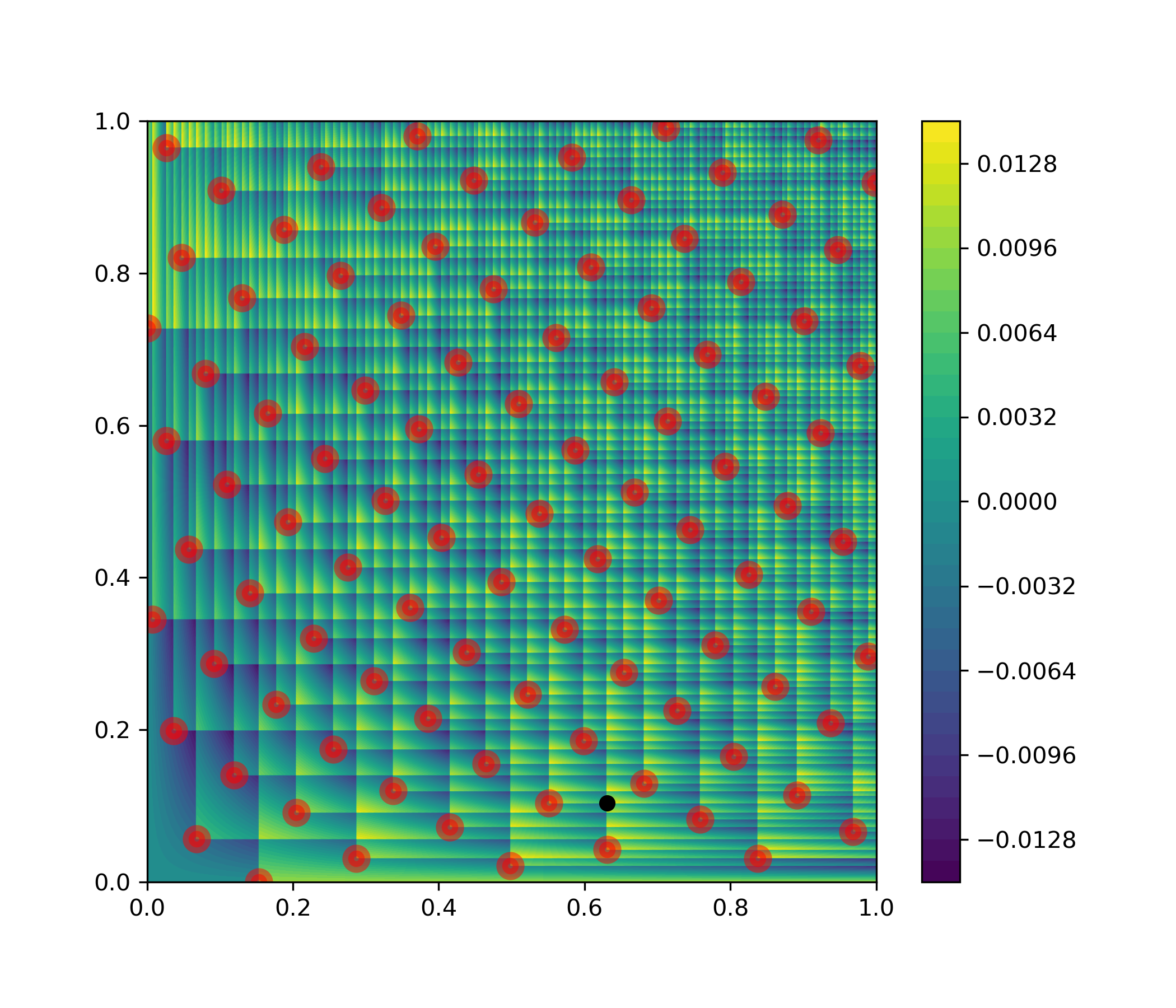}
    \includegraphics[width=0.325\textwidth]{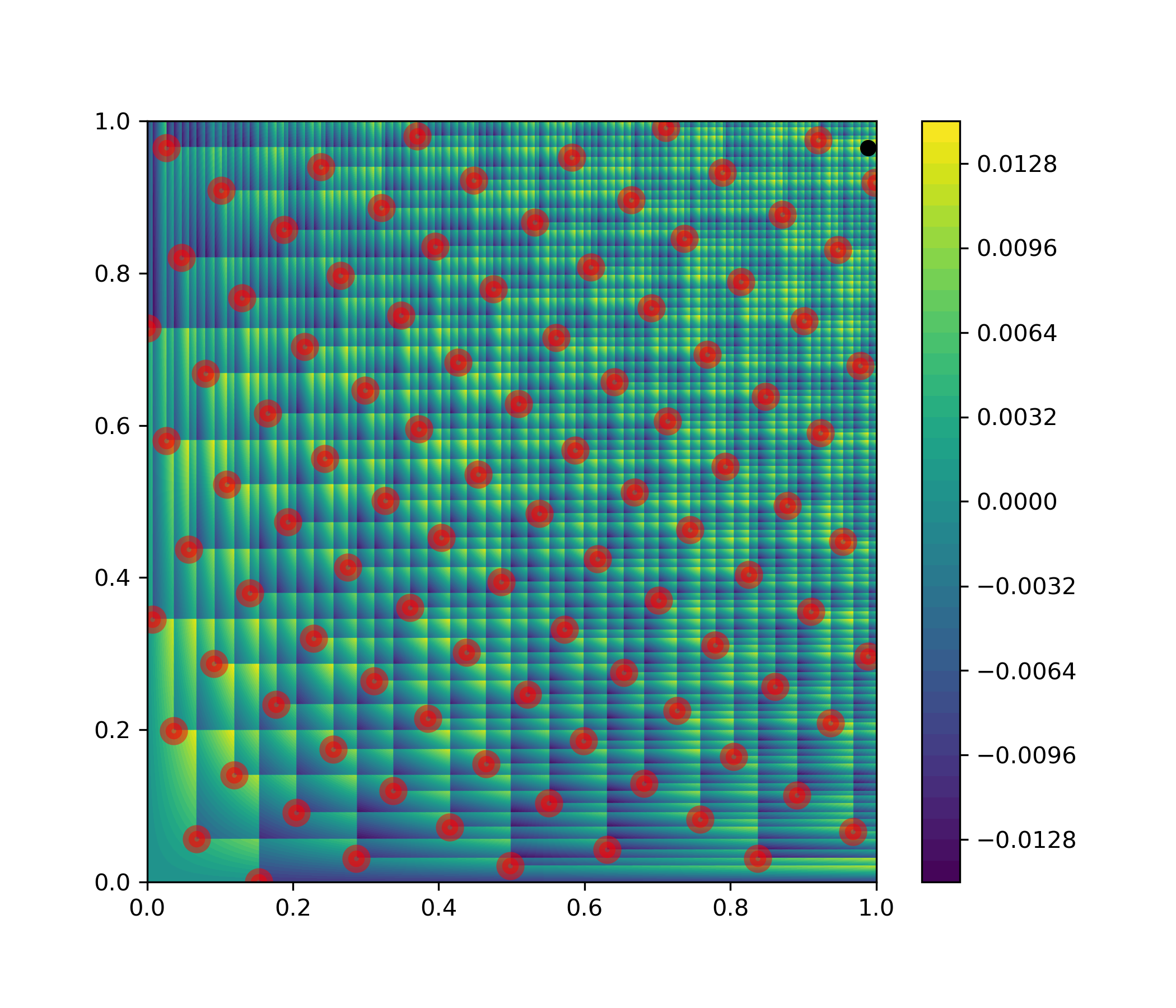}
    \includegraphics[width=0.325\textwidth]{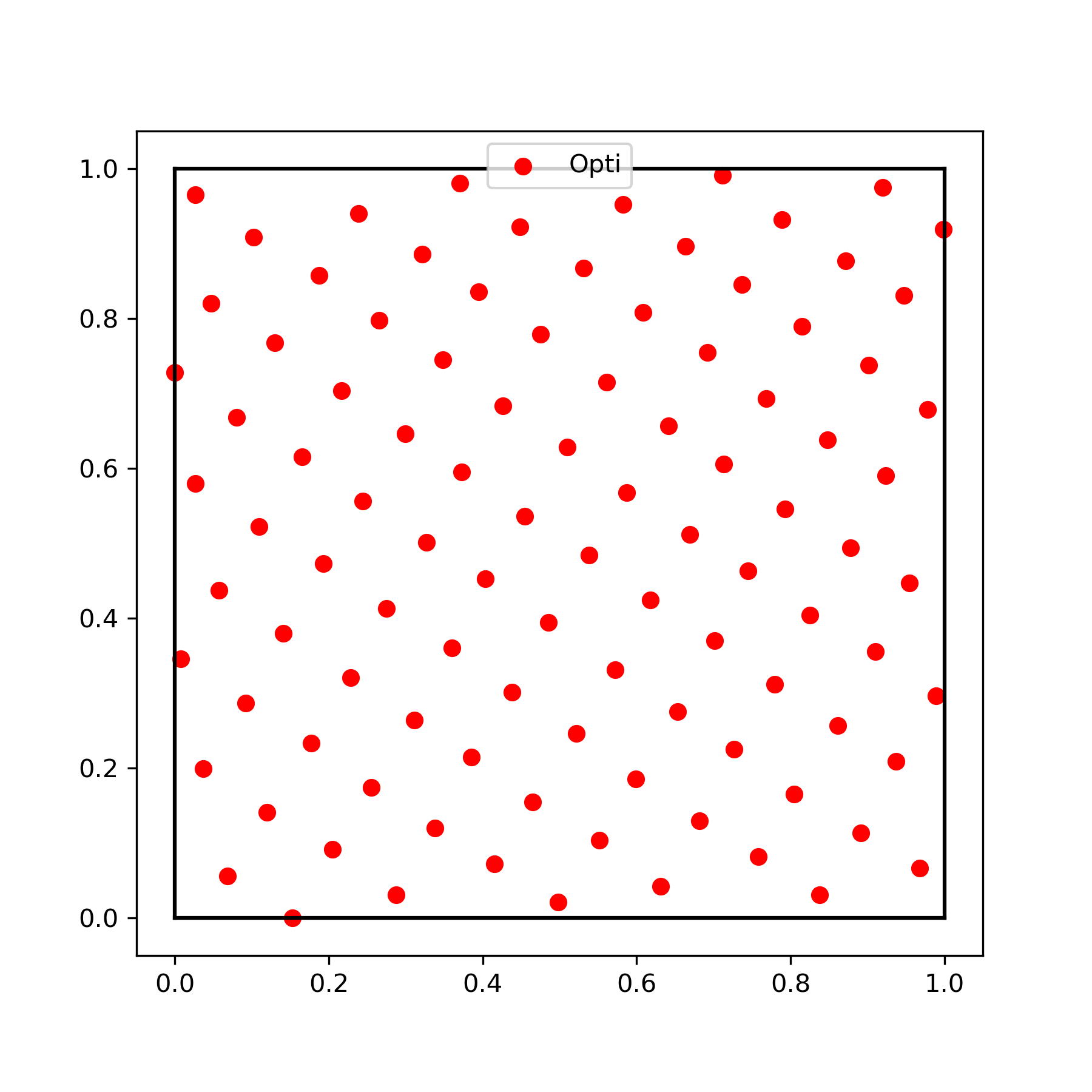}
    \includegraphics[width=0.325\textwidth]{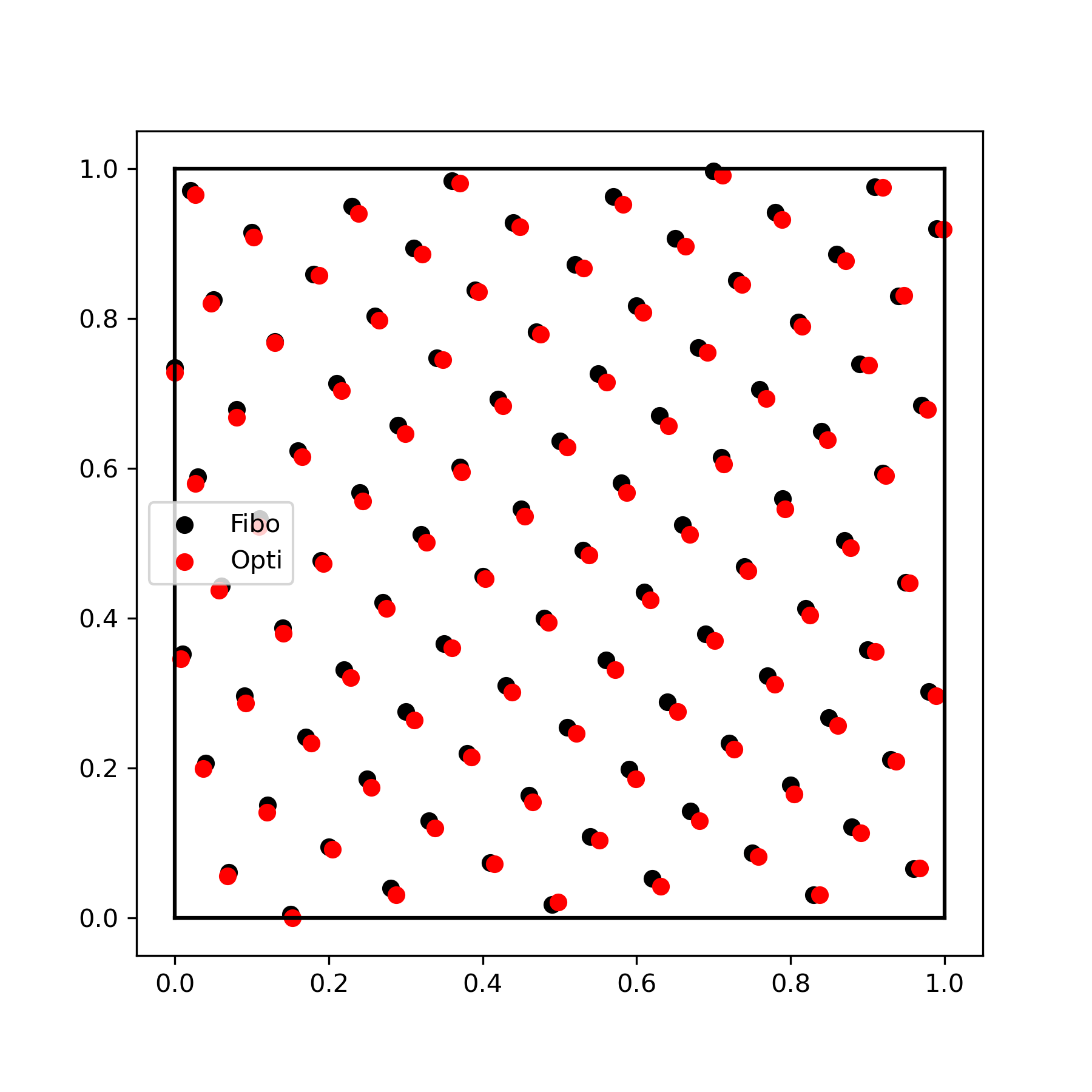}
    \caption{Point sets obtained by taking the permutation for the Fibonacci set starting at the 75th point of the Fibonacci sequence. From first to last:  local discrepancies of $F^*$, local discrepancies of open boxes for $F^*$, local discrepancies of closed boxes for $F^*$, the new optimized 100 point set $F^*$, and a comparison between the offset between the Fibonacci set $F_{74+100}$ (black) and $F^*$ (red).}
    \label{fig:enter-label}
\end{figure}

\subsection{\texorpdfstring{$L_{\infty}$}{L-infty} Star Point Sets in Three Dimensions}

As was shown in~\cite{CDKP}, the two-dimensional assignment model can be extended to three dimensions. The only issue is that the quadratic volume terms become a product of three variables in three dimensions. Fixing the permutation can therefore be done in this context as well, also leading to excellent results past the 8 point limit for optimal sets from \cite{CDKP}, as Table~\ref{tab:3d} shows. These results align with our observations in two dimensions. However, we now lack the excellent Fibonacci set to compare to. Comparisons are therefore done only with the Sobol' and Kritzinger sequences and a point set communicated to us by the authors of~\cite{MPMC}. All computations for $n\leq 38$ were done with a 10 minute runtime, and the $n\geq 42$ with a one hour time limit.

\begin{table}[h]
    \centering
    \caption{Discrepancy values obtained in three dimensions from the permutations generated by the lifted two-dimensional Kritzinger and Sobol' sequences (with an (L)), and the three-dimensional Sobol' sequence. We add the non-optimized Sobol' sequence to compare with. The lifted sequences perform better than the sequences of a dimension higher. For reference, the best 50 point set obtained with MPMC had a discrepancy of 0.05534.}
    \vspace{0.5pt}
    {\small{
    \begin{tabular}{|c|c|c|c|c|c|c|c|c|c|}
    \hline
         $n$&14& 18& 22&26&30&34&38&42& 50  \\
         \hline
         $\pi$(Kritzinger (L))& 0.1480&0.1222& 0.1044&0.09091&0.08265&0.07213&0.06545&0.05839& 0.05226\\
         \hline
         $\pi$(Sobol' (L))&0.14286&0.1136& 0.1002&0.08882&0.08016&0.07635&0.06947&0.06836&0.05500\\
         \hline
         $\pi$(Sobol')& 0.1429& 0.125&0.1103&0.09977&0.0885&0.07983&0.07183&0.07073&0.05978\\
         \hline
         Sobol'& 0.23096& 0.21788& 0.17602&0.16657& 0.13322& 0.17243 & 0.13068& 0.14273& 0.08997\\
         \hline
    \end{tabular}
    }}
    \label{tab:3d}
\end{table}

\section{Investigating the Structure of Good Permutations}
Following the results from the previous sections, the natural question is to try to find if there are simple ways of characterizing whether a permutation will lead to good optimized sets or not. While we have not found permutations generally better than that induced by the Fibonacci sequence (with shifts), we have no proof for the optimality of shifted Fibonacci sequences in this respect. %of this result. 
The van der Corput sequence, which performed very well as was shown in Table~\ref{tab:usu}, has a very structured permutation $\pi$: the difference between $\pi(i)$ and $\pi(i+1)$ can take only very specific values depending on $\lfloor \log_2(n)\rfloor$ and $n$. $\lfloor \log_2(n)\rfloor$ defines the set of possible values, while $n-2^{\lfloor \log_2(n)\rfloor}$ dictates where the offsets of 1 happens for these values. While relatively easy to describe, this does not seem easy to generalize to a larger set of good candidate permutations.

\subsection{Kronecker Permutations}

Rank-one lattices whose first point is $(0,0)$ are relatively few, around $O(n^2)$, and can be exhaustively searched. Indeed, such a lattice $L(r)$ is defined in the following way $L(r):=\{(i/n,\{ir\}:i \in \{1,\ldots,n\} \}$, where $\{ir\}$ is the fractional part of $ir$. We will call the permutations resulting from these lattices \emph{Kronecker permutations} as they are the two-dimensional set version of Kronecker sequences in one-dimension. The Fibonacci set is one such permutation. In case of equality between multiple coordinates $i<j<k$, we set $\pi(i)<\pi(j)<\pi(k)$.

\begin{theorem}
    Given an initial point in $(0,0)$, there exist only $O(n^2)$ different Kronecker permutations generated by the associated lattices.
\end{theorem}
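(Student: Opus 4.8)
The plan is to count the distinct permutations arising from lattices $L(r)$ as $r$ ranges over $[0,1)$ (or equivalently over a suitable finite set of rationals), by observing that the permutation $\pi_r$ depends on $r$ only through the relative order of the fractional parts $\{r\}, \{2r\}, \dots, \{nr\}$. First I would note that, since the first coordinates $i/n$ are already sorted, the permutation $\pi_r$ is entirely determined by the ordering of the points $\{ir\}$ for $i \in \{1,\dots,n\}$ on the circle $[0,1)$. So the real object to count is the number of distinct \emph{orderings} of the sequence $(\{r\}, \{2r\}, \dots, \{nr\})$ that can occur.

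Next I would identify the values of $r$ at which this ordering changes. The ordering of $\{ir\}$ and $\{jr\}$ for a fixed pair $i < j$ flips precisely when $\{ir\} = \{jr\}$, i.e. when $(j-i)r \in \mathbb{Z}$, equivalently $r = k/(j-i)$ for some integer $k$. As $i,j$ range over $\{1,\dots,n\}$, the difference $j-i$ ranges over $\{1,\dots,n-1\}$, so the only values of $r$ in $[0,1)$ at which \emph{any} pairwise comparison can change are the rationals with denominator at most $n-1$, namely the Farey points $\mathcal{F}_{n-1} = \{k/m : 1 \le m \le n-1,\ 0 \le k < m\}$. It is classical that $|\mathcal{F}_{n-1}| = \sum_{m=1}^{n-1}\varphi(m) = O(n^2)$. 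Between two consecutive Farey points the entire order pattern of $(\{ir\})_{i=1}^n$ is constant, hence $\pi_r$ is constant on each such open interval; adding in the permutations (suitably defined via the stated tie-breaking rule $\pi(i)<\pi(j)<\pi(k)$ for equal coordinates) that occur exactly at the $O(n^2)$ Farey points themselves, we get at most $2|\mathcal{F}_{n-1}| + 1 = O(n^2)$ distinct Kronecker permutations. I would also remark that it suffices to take $r$ rational with denominator $n$ in the lattice definition — since $L(r)$ and $L(r')$ give the same $n$-point configuration whenever $\lfloor ir\rfloor = \lfloor ir'\rfloor$ for all $i$ — which gives the cruder but equally sufficient bound $|\{0,1/n,\dots,(n-1)/n\} \text{ refined}| = O(n^2)$; I would present whichever version is cleanest.

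**The main obstacle** is being careful about the tie-breaking convention and the boundary cases: when $r$ is itself a Farey point of small denominator, several of the $\{ir\}$ coincide, and one must check that the stated rule ($i<j<k \Rightarrow \pi(i)<\pi(j)<\pi(k)$ in case of equal coordinates) yields a well-defined permutation and that this does not secretly create more than $O(n^2)$ possibilities. Handling this cleanly amounts to saying that each of the $O(n^2)$ open Farey intervals contributes one permutation and each of the $O(n^2)$ Farey points contributes one more, so the total is still $O(n^2)$; one should also verify that $r$ and $r+1$ (or $r \bmod 1$) give the same lattice, so restricting to $r \in [0,1)$ loses nothing, and that we may ignore $r=0$ or treat it as a degenerate single point column. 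Everything else is a routine bound: $\sum_{m \le n} \varphi(m) \le \sum_{m \le n} m = O(n^2)$, with no need for the sharper $\sim 3n^2/\pi^2$ asymptotic.
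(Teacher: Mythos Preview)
Your proposal is correct and follows essentially the same route as the paper: both argue that the permutation can change only at values of $r$ where two fractional parts $\{ir\}$ and $\{jr\}$ coincide, i.e.\ at rationals with denominator at most $n$ (you sharpen this to $n-1$ via the difference $j-i$), and there are $\Theta(n^2)$ such rationals. Your treatment is somewhat more careful---you name the Farey set explicitly, bound it via $\sum_{m}\varphi(m)$, and discuss the tie-breaking convention at the critical rationals---whereas the paper phrases the same idea as ``two points are aligned horizontally iff the origin is aligned with another point'' and directly invokes the $\Theta(n^2)$ count of small-denominator rationals.
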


\begin{proof}
    As $r$ goes from 0 to 1, one obtains all the different rank-one lattices possible. Since each point moves continuously over the two-dimensional torus, there can be a change from one permutation to the next if and only if two points are aligned horizontally. By periodicity of the lattice construction, two points can be aligned horizontally if and only if the origin point $(0,0)$ is aligned with another point. We now need to count how many times such an event can happen for $r\in [0,1]$. Since the second coordinate is defined by $\{ir\}$, these events correspond exactly to pairs $(i,r)$ such that $ir$ is an integer. Given that $i$ is itself a positive integer, bounded by $n$, this can happen only when $r$ is a rational whose denominator is at most $n$. There are $\Theta(n^2)$ such rationals in $[0,1]$ and thus $O(n^2)$ different Kronecker permutations (some may be identical).
\end{proof}

This first result greatly reduces the permutation search space from $n!$ to $\Theta(n^2)$. While it is possible that whole classes of good permutations are lost by considering only Kronecker permutations, Figure~\ref{fig:all_perms} suggests that Kronecker permutations provide point sets of excellent quality, better than all previously known constructions, and of far lower $L_{\infty}$ star discrepancy than randomly chosen permutations. Despite far more outliers, associated to permutations such as the identity or the decreasing permutation, the best Kronecker permutations clearly outperform the best random ones. Indeed, the best value found is at 0.0149, against 0.0263 for random permutations. In particular, a much larger proportion of permutations have lower discrepancy than the best random permutations, while vastly reducing the overall search space.

\begin{figure}
    \centering
    \includegraphics[width=0.45\textwidth]{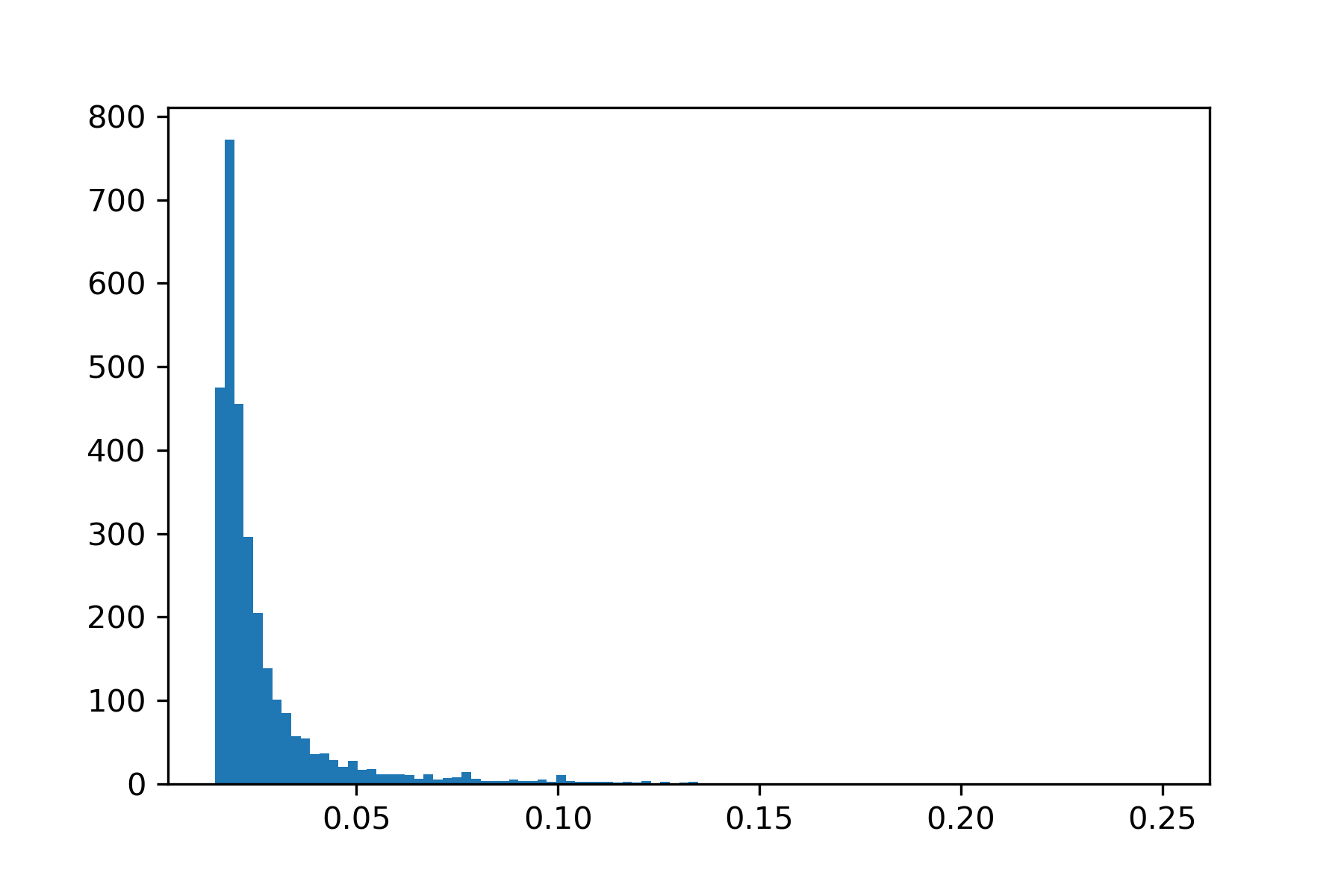}
    \includegraphics[width=0.45\textwidth]{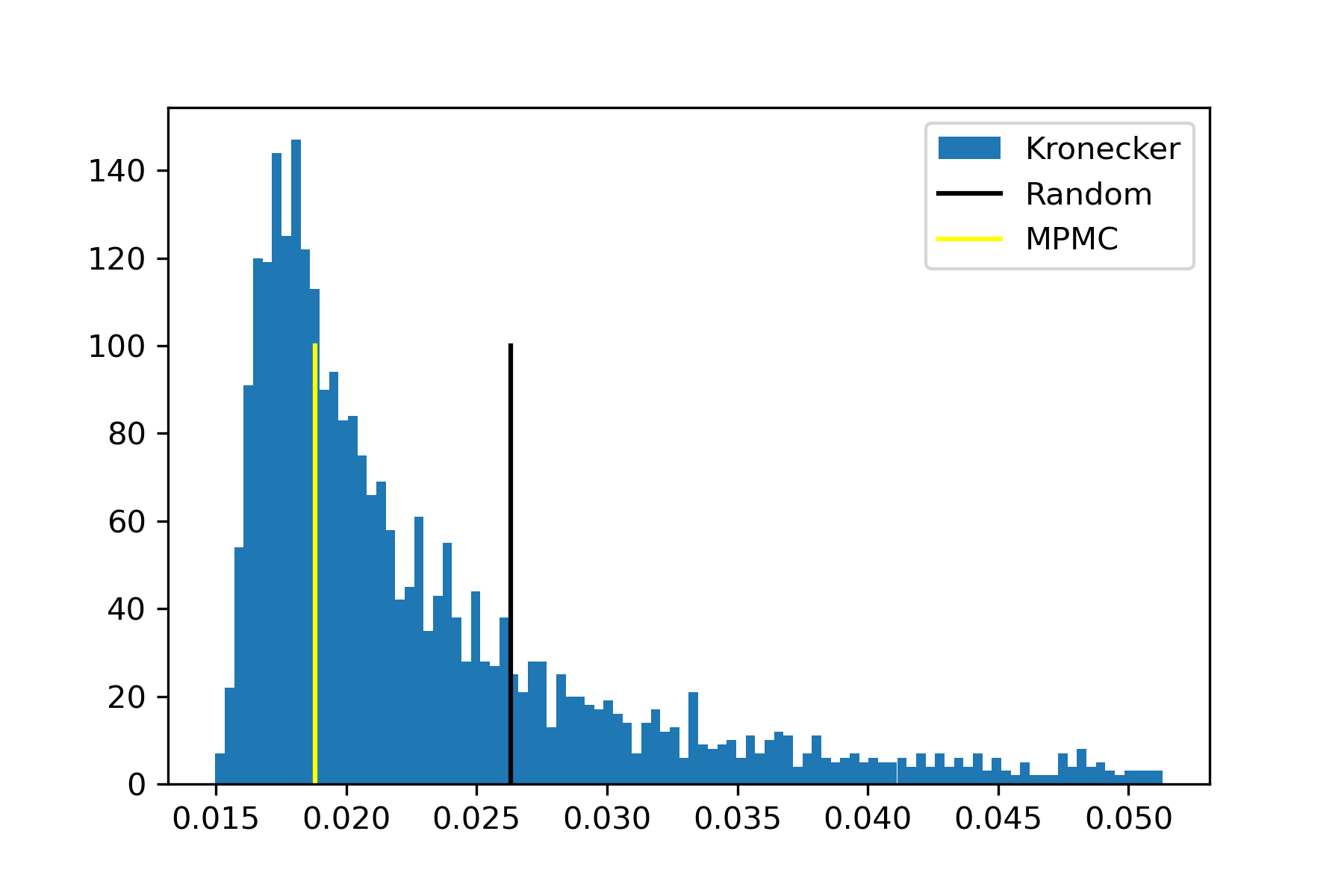}
    \caption{Histograms plotting the discrepancy values $f(\pi)$ obtained from all Kronecker permutations for 100 points, after optimization. The histogram on the left is complete, while the one on the right is focused on the densest areas from the left histogram. For reference, the (unoptimized but shifted by one) Fibonacci set of 100 points has a discrepancy of 0.0261, the best MPMC result from~\cite{MPMC} is 0.0188 (yellow vertical line), and our best result from the timeout table in~\cite{CDKP} was 0.0193 at 40\,000s. We also add a black vertical line to represent the best result from all random permutations.}
    \label{fig:all_perms}
\end{figure}

Though not directly relevant for this paper, we note that the theorem can be extended to higher dimensions, where the lattice is given by $\{(i/n,\{ir_1\},\ldots,\{ir_{d-1}\}):i \in \{1,\ldots,n\}\}$, with $d-1$ lattice parameters~$r_i$.

\begin{corollary}
    There exist $O(n^{2d-2})$ Kronecker permutations in dimension $d$.
\end{corollary}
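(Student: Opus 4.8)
The plan is to follow the same continuity-and-counting argument as in the two-dimensional theorem, but now applied to each of the $d-1$ lattice parameters independently. Recall that a Kronecker permutation in dimension $d$ is determined by the relative order of the points $(i/n, \{ir_1\}, \dots, \{ir_{d-1}\})$ for $i=1,\dots,n$; since the first coordinates $i/n$ are already all distinct and fixed, the permutation data is entirely captured by the relative order of the $n$ values $\{ir_k\}$ along \emph{each} of the $d-1$ remaining axes. So the first step is to observe that a single ``combined'' permutation in dimension $d$ is nothing more than a $(d-1)$-tuple of one-dimensional orderings, one for each axis $k$.

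Next I would argue, exactly as in the proof of the theorem, that for a fixed axis $k$ the ordering of the values $\{ir_k\}$ as a function of $r_k \in [0,1]$ changes only when two of these values coincide, which by periodicity happens precisely when $ir_k \in \mathbb{Z}$ for some $i \in \{1,\dots,n\}$, i.e. when $r_k$ is a rational with denominator at most $n$. There are $\Theta(n^2)$ such rationals in $[0,1]$, so the $r_k$-axis is partitioned into $O(n^2)$ intervals, on each of which the $k$-th coordinate ordering is constant. Hence there are only $O(n^2)$ distinct orderings achievable along axis $k$.

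Finally, the number of distinct $d$-dimensional Kronecker permutations is at most the product over $k=1,\dots,d-1$ of the number of achievable orderings along axis $k$, which gives $O((n^2)^{d-1}) = O(n^{2d-2})$. One small point to address carefully is the tie-breaking convention stated in the paper (when coordinates coincide, order by index); this just means each ``boundary'' rational $r_k$ contributes a well-defined ordering that is already counted among the $O(n^2)$, so it does not affect the bound. The only mild subtlety — and the closest thing to an obstacle — is making explicit that the $d-1$ axes are chosen independently, so that the total count is a product rather than something larger; but this is immediate since the $r_k$ are free parameters and the ordering along axis $k$ depends only on $r_k$. I would therefore present this as a short paragraph-length corollary proof rather than anything requiring new ideas.
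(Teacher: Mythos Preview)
Your proposal is correct and follows exactly the paper's approach: the paper's one-line justification is that ``by the previous theorem, each coordinate barring the first has $O(n^2)$ possible permutations,'' and you have simply spelled this out carefully, including the observation that the $d-1$ parameters $r_k$ act independently so the count is a product. Your added remarks on tie-breaking and independence are sound but go beyond what the paper bothers to say.
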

This result is obtained by observing that, by the previous theorem, each coordinate barring the first has $O(n^2)$ possible permutations. Coming back to dimension 2, we can further characterize Kronecker permutations with the following proposition.

\begin{proposition}[\cite{Sos}]
Let $\pi$ be a Kronecker permutation of $\{1,\ldots,n\}$. Then $\pi(i+1)-\pi(i)$ can take only three values: $\pi(2)$, $\pi(2)+1$, or $n-\pi(2)$.
\end{proposition}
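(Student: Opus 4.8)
The statement is a combinatorial version of the classical three-distance (Steinhaus) theorem, due to S\'os; the plan is to re-read the claim about $\pi$ as a statement about the arcs that a Kronecker orbit cuts on the circle, and then to invoke (or re-derive) the three-gap structure.

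First I would set up the dictionary. For the lattice $L(r)=\{(i/n,\{ir\})\}$ the permutation $\pi$ records, for each point, the rank of its second coordinate: $\pi(i)$ is the position of $\{ir\}$ in the sorted list of the second coordinates, with the prescribed index tie-break. Passing from the point with first coordinate $i/n$ to the next one replaces the second coordinate $\{ir\}$ by $\{ir\}+r$ on the circle $\R/\mathbb{Z}$; hence $\pi(i+1)-\pi(i)$ is the number of lattice points $\{jr\}$ lying in the half-open clockwise arc of length $\{r\}$ that starts just after $\{ir\}$ and ends exactly at $\{(i+1)r\}$ --- corrected by $-n$ precisely when that arc sweeps past the origin. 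Two features of this arc are crucial: its length is the single fixed number $\{r\}$, and its right endpoint $\{(i+1)r\}$ is itself a lattice point, hence never strictly interior to a gap of the orbit.

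Next I would invoke the three-distance theorem in its refined form: the $n$ points $\{r\},\{2r\},\dots,\{nr\}$ partition the circle into arcs of at most three lengths $\delta_1\le\delta_2\le\delta_3$ with $\delta_3=\delta_1+\delta_2$, and the clockwise-neighbour relation among them is controlled by the two closest-return indices $p$ and $q$ of the orbit to the origin --- concretely, moving to the clockwise neighbour changes the index by $+p$, by $-q$, or by $p-q$ according to a simple rule depending on the current index. (For a self-contained proof this is the standard induction on $n$: adding the $(n+1)$-st point subdivides one of the longest arcs into one short and one medium arc, and iterating this local move produces exactly the three-length pattern and the neighbour rule.) Feeding this into the dictionary: for a step without wraparound, $\pi(i+1)-\pi(i)$ is the number of lattice points in a length-$\{r\}$ arc based at a lattice point, and the neighbour rule forces this count to take only two values, which are consecutive integers because $\delta_1\neq\delta_2$ and the arc's right endpoint is always a lattice point. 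A base case --- the step adjacent to the origin --- evaluates one of these counts in terms of $\pi(2)$ and yields the values $\pi(2)$ and $\pi(2)+1$. For a step with wraparound the same count reappears reduced by $n$, and one checks, again from the neighbour rule and the base case, that wraparounds are compatible with only one of the two counts, so this branch contributes a single further value, which simplifies to $n-\pi(2)$ in the convention of the statement. Assembling the wrap and no-wrap cases gives exactly the three claimed values.

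The hard part is this last round of bookkeeping: distinguishing open from half-open arcs, using that the arc length $\{r\}$ coincides with the lattice point $\{r\}$ itself, and correlating ``the step wraps'' with ``the count is the appropriate one of the two values'', in order to nail the three values down as precisely $\pi(2)$, $\pi(2)+1$, $n-\pi(2)$ rather than as some other triple expressed through the closest-return indices $p$ and $q$. A secondary, more routine point is that Kronecker permutations arise from rational $r=a/q$ with $q\le n$, so several second coordinates coincide; one must check that the tie-break ``$i<j<k\Rightarrow\pi(i)<\pi(j)<\pi(k)$'' is the one obtained by perturbing $r$ to a nearby irrational, so that the three-distance argument carries over unchanged (alternatively, one works directly with the $q$-periodicity of $k\mapsto\{kr\}$ and counts lattice points in a sliding window of length $n$).
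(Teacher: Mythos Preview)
The paper does not give its own proof of this proposition: it is stated with a citation to S\'os and followed only by the remark that the result ``is used to show the three-gap theorem by S\'os in~[S\'os].'' There is therefore nothing in the paper to compare your argument against.

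That said, your plan is exactly the classical one and is essentially S\'os's own: translate the rank differences $\pi(i{+}1)-\pi(i)$ into counts of orbit points in a sliding arc of fixed length $\{r\}$ on $\mathbb{R}/\mathbb{Z}$ (with a $-n$ correction at wraparound), and then read off the three-gap structure via the closest-return indices. The step you yourself flag as delicate---pinning the three values down as precisely $\pi(2)$, $\pi(2)+1$, $n-\pi(2)$---genuinely is delicate and is sensitive to conventions: whether $\pi$ denotes the rank map or its inverse, whether the orbit starts at $i=0$ or $i=1$, and whether differences are taken signed or modulo $n$. As written in the paper all three listed values are non-negative, whereas your wraparound branch necessarily yields a negative difference, so at least one of the three must be read modulo $n$ (or carries an implicit sign) for the statement to be literally correct. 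Your instinct to work this out carefully from a base case adjacent to the origin is the right way to resolve it; just be prepared for the final triple to look like a convention-dependent variant of the one printed.
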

This result is used to show the three-gap theorem by Sós in~\cite{Sos}, which states that given $n$ points of a Kronecker sequence on the one-dimensional torus, there are at most three distinct distances between two consecutive points on the torus.

From the permutation point of view, a Kronecker permutation can therefore be characterized by $\pi(2)$, the relative position of the first non-zero point, and $\pi(n)$, the relative position of the last point. The first determines the allowed $\pi(j+1)-\pi(j)$ values, while the second determines where the $\pi(2)+1$ changes will take place. From our current experiments, it does not look like the value of $\pi(2)$ is easy to choose. Indeed, Figure~\ref{fig:EvolLatt} shows that good parameters seem to be relatively evenly spread over the interval $[1/n,1]$. We note that a very similar plot can be obtained by plotting the discrepancy values for the \emph{non-optimized} lattices with the same parameters, suggesting that good low-discrepancy lattices will generally translate to good Kronecker permutations for optimization.

\begin{figure}
    \centering
    \includegraphics[width=0.6\textwidth]{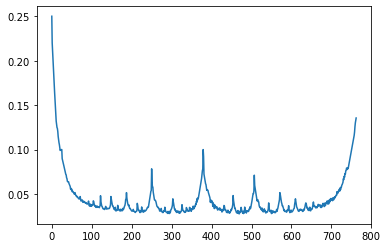}
    \caption{Evolution of the discrepancy values $f(\pi)$ obtained for $n=50$ when changing the Kronecker permutation from the identity to the nearly-decreasing permutation ($\pi(1)=1, \pi(2)=n,\ldots,\pi(n)=2$). The numbering is based on the ordering of the relevant generating rationals in $[1/n,1]$.}
    \label{fig:EvolLatt}
\end{figure}

\section{Conclusion}

Using nonlinear optimization models previously introduced in~\cite{CDKP}, we provide a method of obtaining excellent low-discrepancy point sets for a much higher number of points by fixing the permutation induced by the point set. This also has the advantage of changing the object we are working with from a point set in $[0,1)^d$ to $d-1$ permutations of $\{1,\ldots,n\}$. Possibly, these permutations could prove easier to work with for mathematicians when trying to provide formal bounds for the discrepancy of the sets. From a machine learning perspective, the simpler structure could also lead to new models, for example originating from computer vision.

Overall, our results show once again that the discrepancy of point sets can be reduced much further compared to traditional constructions, in particular in dimension 2. Combined with the results from~\cite{MPMC}, this suggests that much progress can still be made for the design of low-discrepancy sets. As illustrated in Figure~\ref{fig:BEST}, we anticipate that constructions in the near future to get very close to, or even outperform, theoretically proven upper bounds on the $L_{\infty}$ star discrepancy.

\section*{Acknowledgments}
This work was granted access to the HPC resources of the SACADO MeSU platform at Sorbonne Université.

This project was partially supported by PHC Procope 50969UF project funded by the German Academic Exchange Service (DAAD, Project-ID 57705092), the French Ministry of Foreign Affairs, and the French Ministry of Higher Education and Research.
This work was also partially supported by the “PHC PESSOA” program (project DISCREPANCY -- Discrepancy Problems - Algorithms and Complexity, number 49173PH), funded by the French Ministry for Europe and Foreign Affairs, the French Ministry for Higher Education and Research, and the Portuguese Foundation for Science and Technology (FCT). This work is partially funded by the FCT, I.P./MCTES through national funds (PIDDAC), within the scope of CISUC R\&D Unit - UID/CEC/00326/2020.

\bibliographystyle{alpha}
\bibliography{refs}

\end{document}